\newtheorem{theorem}{Theorem}%[section]
\newtheorem{lemma}{Lemma}
\newcommand{\Bernoulli}{{\rm Bernoulli}}
\newenvironment{lemmarep}[1]{\noindent {\bf Lemma #1.}\begin{it}}{\end{it}}
\renewcommand{\caption}[1]{\singlespacing\hangcaption{#1}\normalspacing}
\newif\iflong
\newif\ifdraft
\newcommand{\M}{{\mathcal M}}
\newcounter{constcount}
\newcounter{numcount}
\newcounter{thmcnt}
  \let\Oldsection\section
\renewcommand{\section}{\stepcounter{thmcnt}\Oldsection}
\newcommand{\aln}[1]{\begin{align*}#1\end{align*}}
\newcommand{\al}[1]{\begin{align}#1\end{align}}
\renewcommand{\paragraph}[1]{
\vspace{2mm}
\noindent \textbf{#1}
}
\def\Item$#1${\item $\displaystyle#1$
   \hfill\refstepcounter{equation}(\theequation)}
\newcommand{\bea}{\begin{eqnarray}}
\newcommand{\eea}{\end{eqnarray}}
\newcommand{\beas}{\begin{eqnarray*}}
\newcommand{\eeas}{\end{eqnarray*}}
\newcommand\ML{ML} % length of the sequences
\newcommand\Tex{}
\newcommand\PR[2][\Tex]{
\ifthenelse{\equal{#1}{}}{{\mathrm{Pr}}\left(#2\right)}{\ensuremath{{\mathrm{Pr}}_{#1}\left[ #2\right]}}}
\newcommand\EX[2][\Tex]{
\ifthenelse{\equal{#1}{}}{{\mathbb E}\left[#2\right]}{\ensuremath{{\mathbb E}_{#1}\left[ #2\right]}}}
\newcommand\Var[2][\Tex]{
\ifthenelse{\equal{#1}{}}{{\mathrm{Var}}\left[#2\right]}{\ensuremath{{\mathrm{Var}}_{#1}\left[ #2\right]}}}
\renewcommand\M{M} % number of molecules
\newcommand\len{L} % length of the sequences
\begin{document}
% \title{DNA Data Storage with Multi-Draw Channels} 

% \title{Achieving the Multi-Draw DNA Storage Channel Capacity with Linear Codes} 
\title{Achieving the Capacity of a DNA Storage Channel with Linear Coding Schemes}

% %%% Single author, or several authors with same affiliation:
% \author{%
%   \IEEEauthorblockN{Stefan M.~Moser}
%   \IEEEauthorblockA{ETH Zürich\\
%                     ISI (D-ITET)\\
%                     CH-8092 Zürich, Switzerland\\
%                     Email: moser@isi.ee.ethz.ch}
% }

%%% Several authors with up to three affiliations:
\author{%
  \IEEEauthorblockN{Kel Levick}
\IEEEauthorblockA{
% \textit{Electrical and Computer E} \\
University of Illinois, Urbana-Champaign\\
Urbana, IL, USA \\
klevick2@illinois.edu}
  \and
  \IEEEauthorblockN{Reinhard Heckel}
\IEEEauthorblockA{
% \textit{Electrical and Computer E} \\
Technical University of Munich \\
Munich, Germany \\
reinhard.heckel@tum.de}
  \and
\IEEEauthorblockN{Ilan Shomorony}
\IEEEauthorblockA{
% \textit{Electrical and Computer E} \\
University of Illinois, Urbana-Champaign\\
Urbana, IL, USA \\
ilans@illinois.edu}
}

%%% Many authors with many affiliations:
% \author{%
%   \IEEEauthorblockN{Albus Dumbledore\IEEEauthorrefmark{1},
%                     Olympe Maxime\IEEEauthorrefmark{2},
%                     Stefan M.~Moser\IEEEauthorrefmark{3}\IEEEauthorrefmark{4},
%                     and Harry Potter\IEEEauthorrefmark{1}}
%   \IEEEauthorblockA{\IEEEauthorrefmark{1}%
%                     Hogwarts School of Witchcraft and Wizardry,
%                     1714 Hogsmeade, Scotland,
%                     \{dumbledore, potter\}@hogwarts.edu}
%   \IEEEauthorblockA{\IEEEauthorrefmark{2}%
%                     Beauxbatons Academy of Magic,
%                     1290 Pyrénées, France,
%                     maxime@beauxbatons.edu}
%   \IEEEauthorblockA{\IEEEauthorrefmark{3}%
%                     ETH Zürich, ISI (D-ITET), ETH Zentrum, 
%                     CH-8092 Zürich, Switzerland,
%                     moser@isi.ee.ethz.ch}
%   \IEEEauthorblockA{\IEEEauthorrefmark{4}%
%                     National Chiao Tung University (NCTU), 
%                     Hsinchu, Taiwan,
%                     moser@isi.ee.ethz.ch}
% }

\maketitle

%%%%%%
%% Abstract: 
%% If your paper is eligible for the student paper award, please add
%% the comment "THIS PAPER IS ELIGIBLE FOR THE STUDENT PAPER
%% AWARD." as a first line in the abstract. 
%% For the final version of the accepted paper, please do not forget
%% to remove this comment!
%%
\begin{abstract}
Due to the redundant nature of DNA synthesis and sequencing technologies, a basic model for a DNA storage system is a multi-draw ``shuffling-sampling'' channel. 
In this model, a random number of noisy copies of each sequence is observed at the channel output.
Recent works have characterized the capacity of such a DNA storage channel under different noise and sequencing models, relying on sophisticated typicality-based approaches for the achievability.
Here, we consider a multi-draw DNA storage channel 
in the setting of noise corruption by a binary erasure channel.
We show that, in this setting, the capacity is achieved by linear coding schemes.
This leads to a considerably simpler derivation of the capacity expression of a multi-draw DNA storage channel than existing results in the literature.
% by employing a simple counting approach with linear codes.
%Additionally, we prove that for a large set of parameters, a ``code-oblivious'' clustering algorithm, which operates independently of the decoder, has the same capacity as that of a ``code-aware'' clustering algorithm with knowledge of the codebook.
\end{abstract}

\begin{IEEEkeywords}
DNA storage, channel capacity, linear codes % ?
\end{IEEEkeywords}

%% The paper must be self-contained. However, if you are referring to
%% a full version for checking certain proofs, please provide the
%% publically accessible location below.  If the paper is completely
%% self-contained, you can remove the following line from your
%% submission.
%\textit{A full version of this paper is accessible at:}
%\url{https://arxiv.org/pdf/21xx.xxxx.pdf} 

\section{Introduction}

Due to its longevity and high information density, DNA has drawn growing interest in its potential for archival data storage. 
Thanks to recent advancements in DNA sequencing (reading) and synthesizing (writing), 
this idea is becoming practically viable, 
% DNA is a long molecule made up of four nucleotides (Adenine, Cytosine, Guanine, and Thynine) and, for storage purposes, can be viewed as a string over a four-letter alphabet. DNA molecules in living cells can be millions of nucleotides long, but because of technological limitations, DNA is synthesized as many shorter strings and stored unordered in a DNA pool. 
and several groups have recently demonstrated working DNA storage systems
% There have been several recent demonstrations of DNA storage systems in practice
\cite{dna1,dna2,dna3,dna4,dna5,dna6,dna7}. 
In these systems, data is usually stored on short DNA molecules (a few hundred nucleotides). 
The synthesis process is usually redundant and produces a large number of copies of each molecule.
At the time of reading, state-of-the-art sequencing technologies access this information, which corresponds to randomly sampling and reading sequences from the (redundant) DNA pool.
Additionally, sequencing and synthesis may introduce errors to each sequence, most commonly in the form of insertions, substitutions, and deletions. 
%A detailed discussion of practical error profiles is provided in~\cite{dna_errs}. 

A natural mathematical model for DNA storage that accounts for these constraints is as follows: Data is stored onto $M$ sequences, each of length $L$. This can be thought of as a single codeword of length $ML$, broken into $M$ pieces of equal length. During sequencing, $N$ sequences are randomly drawn from this set.
Since the synthesis process is redundant and produces many copies of each molecule, and since the sequencing process is often preceded by Polymerase Chain Reaction (PCR), which effectively amplifies the number of copies of each molecule in the pool,
several of the sequenced molecules correspond to the same original input sequence.
However, because the synthesis and sequencing processes are noisy, the observed sequences are corrupted by \emph{distinct noise patterns}.
Therefore, an end-to-end model for DNA storage that captures this output sequence redundancy is the noisy shuffling-sampling channel with \emph{multi-draws}~\cite{lenz_ub}, shown in Figure~\ref{fig:multidraw}. 

First, each of the $M$ input strings are amplified a random 
% $\sim{Q}$ 
number of times.
The resulting molecules are independently corrupted by a noisy channel and shuffled out of order.
Notice that some molecules may be sampled zero times, corresponding to the case where they are not sequenced at all.
The decoder must then, without any knowledge of which molecules were sampled, reconstruct the original stored message using the sampled sequences at the channel output. 

\begin{figure}[!t]
    \centering
    \includegraphics[width=0.9\linewidth]{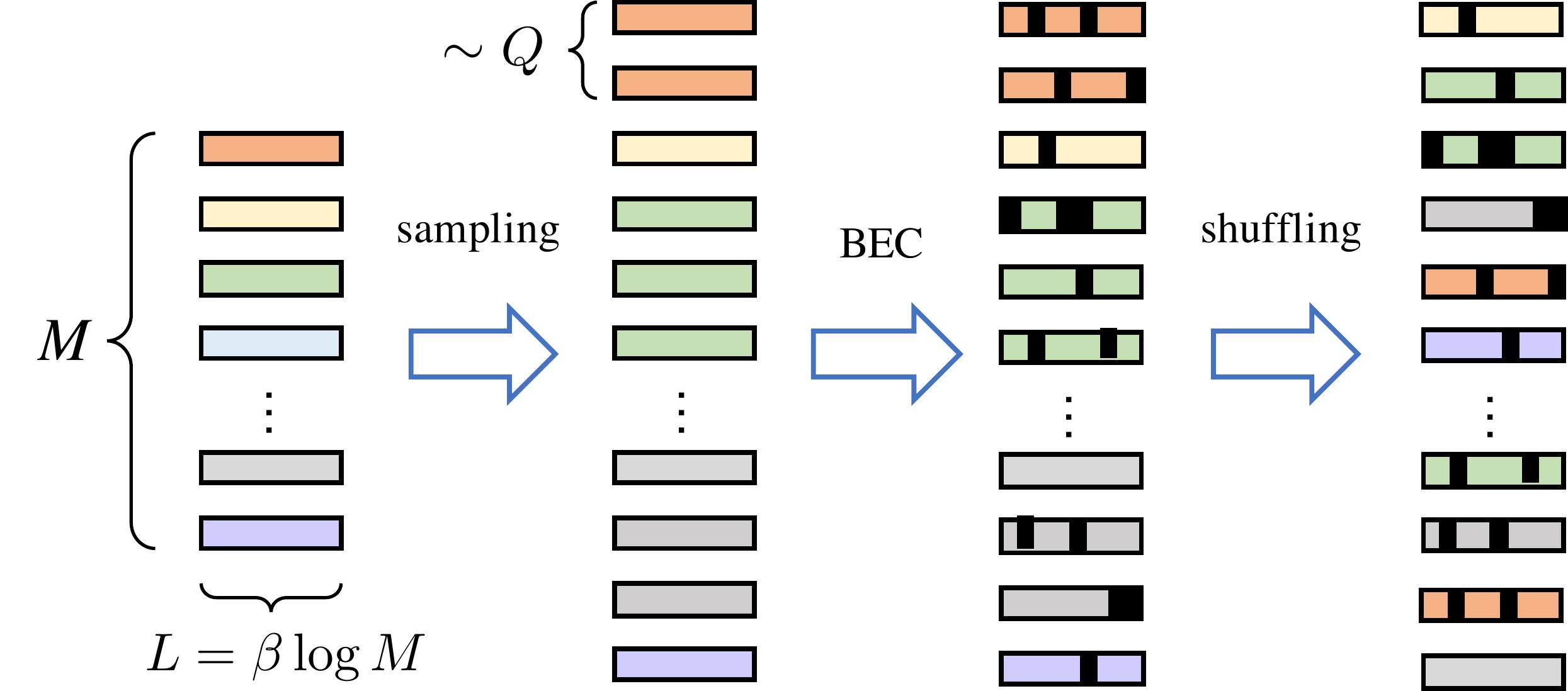}
    \caption{The BEC multi-draw DNA Storage channel.}
    \label{fig:multidraw}
\end{figure}

In this multi-draw setting, a clustering problem arises from the need to identify which of the output sequences correspond to the same input sequence.
If we are able to correctly cluster the output, we can combine the sequences in each cluster to correct errors and decode the stored message using the (partially) error-corrected sequences.

%In this context, a fundamental question is whether it is capacity-optimal to utilize a general-purpose clustering algorithm (e.g., $k$-means) that does not take the code into account.
%In other words, is it information-theoretically optimal to perform (code-oblivious) clustering and decoding separately?

%In this paper, we study this question by focusing on a multi-draw channel where each (binary) sequence is corrupted by a binary erasure channel (BEC) with erasure probability $p$.
The capacity of the multi-draw DNA storage channel with binary symmetric noise has been established using typicality based arguments~\cite{lenz_ub,lenz_ach} and for general discrete memoryless channels based on the method of types~\cite{weinberger2021dna}. 
%We aim to show that the capacity of linear codes, which are computationally easier to implement and decode, can be found using simple counting arguments.
%\blue{
While these arguments are interesting and novel, the analysis of the achievable rate is sophisticated and does not provide direct intuition for the resulting capacity expression.
Moreover, the resulting decoding algorithms are computationally intractable,  raising the question of whether simpler approaches such as linear codes can achieve the channel capacity.
%}

%\blue{
Motivated by the fact that the capacity of a binary erasure channel (BEC) can be straightforwardly achieved using linear codes, in this work we consider the multi-draw DNA storage channel with binary erasure noise.
% We show that the capacity of this channel can be achieved with (random) linear codes
%}
% the multi-draw DNA storage channel with binary erasure noise can be achieved using linear codes, which are computationally easier to implement and decode, and that this achievable rate can be found using simple counting arguments.
%Additionally, we prove that knowledge of the codebook during the clustering step is not necessary to achieve an optimal rate, maximizing the efficiency of the message recovery scheme. 
% In this paper, 
More precisely, we focus on a multi-draw shuffling-sampling channel where each (binary) sequence is corrupted by a binary erasure channel with erasure probability $p$, and each input string is drawn a random $Q$ number of times, where $Q$ has the probability mass function $\text{Pr}(Q=n)=q_n$, for $n=0,1,2,\dots$.
% never drawn is given by $q_0$. 
% We assume the multi-draws follow an arbitrary distribution $Q$ and, 
As in previous works, we consider the asymptotic regime where $M \to \infty$ and the read length scales as $L = \beta \log M$.
The simplicity of the BEC setting allows us to 
% establish the channel capacity for a large set of model parameters.
% In particular we show that random linear codes
show that a (random) linear coding scheme achieves the capacity of this channel for a large set of parameters $(p,\beta)$,
%Rather than performing code-oblivious clustering, our proposed linear coding schemes consider many valid clustering assignments and try to choose the correct clustering based on the code.
%This allows us to characterize two regimes: one where a code-oblivious clustering succeeds and one where clustering and decoding must be performed jointly to achieve capacity.
% This is 
illustrated in Figure~\ref{fig:regimes0}.

Based on the proposed linear coding scheme, we show that, for the blue regime in Figure~\ref{fig:regimes0}, the capacity is given by
\al{
    C =(1-q_0)(E[C_Q|Q\geq 1]-1/\beta),
\label{eq:generalcap}
}
where $C_n$ is the capacity of a multi-draw shuffling-sampling channel with exactly $n$ draws of each input sequence. 
%and $\Pr(N=n) = q_n$.
% $N$ follows the distribution $Q$.
%\blue{
As it turns out, the capacity expression in (\ref{eq:generalcap}) can be verified to be equivalent to the expression obtained in related works \cite{lenz_ub,weinberger2021dna}.
However, the expression in (\ref{eq:generalcap}) is more intuitive and directly recovers 
all previous DNA storage channel capacity results, including the original results for DNA storage channels with ``one or none draws'' ($q_0 + q_1 = 1$) \cite{dna_sh}, in which case $E[C_Q | Q \geq 1] = C_1$.
Hence, we conjecture that (\ref{eq:generalcap}) is the general capacity formula for an arbitrary multi-draw DNA storage channel.

\subsection{Related literature}\label{related_literature}

%\paragraph{Related work:}
%\iscomment{This section should be rewritten again focusing on the key contribution of this work (the investigation about clustering, rather than just another capacity formula). 
%In particular, we should talk about how the work by Lenz uses clustering-decoding separation, and for that reason the capacity is established in a restricted set of parameters.
%We should point out that the capacity expression provided in their work is not the generalizable expression we provide. We noticed that it can be rewritten. 
%}
%\iscomment{Another important paper to cite is Clustering-Correcting codes (https://arxiv.org/abs/1903.04122).
%In a way, that paper already proposed this idea that code-aware clustering can be more powerful. But they focus on explicit code constructions, rather than capacity characterization.
%}

The information-theoretic analysis of DNA storage channels started in \cite{heckel2017fundamental} with a noise-free shuffling-sampling channel, and was later extended to noisy shuffling-sampling channels~\cite{shomorony2019capacity,dna_sh} by modeling the noise as a BSC,
% limits of two similar models of the DNA shuffling-sampling channel are presented in \cite{dna_sh}. This paper discusses a noise-free channel, where strings are drawn with replacement according to an arbitrary distribution, and a noisy channel with noise 
% modeled by a BSC.
% In these 
% , 
and considering a single-draw setting
where strings are drawn either once with probability $1-q_0$ or not at all with probability $q_0$.

The single-draw DNA storage channel with BEC noise was considered 
% problem is extended to the BEC sequencing case 
in \cite{shin}.
% , where each input string is sampled exactly once. %The capacity of this channel is shown to be
%\begin{align*}
%    C_{\text{BEC, single}}&=E[C_N|N\geq 1]-1/\beta\\
%    &=1-p-1/\beta
%\end{align*}
%as long as $1-2p-2/\beta>0$. 
The concept of 
% An interesting method used to prove this capacity comes from the concept of 
\textit{consistency}, where two strings $x_1^L$, $x_2^L$ with erasures are said to be consistent if they agree on every non-erased position, 
% i.e., if for every $1\leq i\leq L$, either $x_1^L(i)=x_2^L(i)$ or either $x_1^L(i)$ or $x_2^L(i)$ is erased. 
is used to establish the capacity for a set of parameters $(p,\beta)$.
We will also make use of the notion of consistency to create consistency graphs from the output strings
in Section~\ref{ach}.

The multi-draw shuffling-sampling channel was first studied in \cite{lenz_ub,lenz_ach}. 
The capacity was characterized for a regime of $(p,\beta)$ and for the case where the output strings are independently observed through a BSC.
The achievability argument was based on a random codebook construction.
The decoder performs a greedy-like clustering of the output strings, and then uses typicality decoding based on a new notion of typicality between a set of $d$ output strings and an input string. 
% The proof of the upper bound relies on the noise level being relatively small, as it requires
% the output strings coming from the same input string to be close together and well separated from other strings so that those strings can be clustered with small error probability.

Capacity results for multi-draw DNA storage channels were recently generalized to arbitrary discrete memoryless channels~\cite{weinberger2021dna}.
A general achievability was provided based on the method of types and a general upper bound was developed by refining the approach used in previous works~\cite{dna_sh,lenz_ub}.
% upper bound on the capacity of the multi-draw DNA storage channel were obtained 
% lower and 
% % are established for when sequencing occurs over any discrete memoryless channel.
% Additionally, the exact capacity is found for a large regime of $\beta=L/\log M$. 

% This approach characterizes a regime of feasible $(p,\beta)$ parameters where a greedy clustering algorithm can recover most of the input sequences with high probability using a joint typicality argument. % something about its complicated

The problem of clustering output strings was studied from a coding-theoretic standpoint in~\cite{clusteringcorrecting}.
It was shown that ``code-aware'' clustering, i.e., a clustering algorithm that exploits knowledge of the codebook (as opposed to a code-oblivious clustering), can reduce the number of redundancy bits needed for correct decoding.

\begin{figure}
\centering 
\vspace{4mm}
\begin{tikzpicture}\label{fig:regimes0}
% scale=0.9

    \begin{semilogxaxis}[xlabel=$p$,
        ylabel=$\beta$,
        ylabel near ticks,
        xmin=0.001,
        xmax=1,
        ymin=0,
        ymax=10,
        width=0.8\linewidth,
        axis on top]

   \addplot[name path=redrect, fill=red, fill opacity=0.5, draw=none, mark=none]
coordinates {
    (0.00001, 0)
    (0.00001, 1)
    (1, 1)
    (1, 0)
};

\addplot[name path=fout,color=green!50!black,line width=2pt,mark=none] table[x index=2,y index=3]{./curves.dat};   

\path[name path=axis] (axis cs:0.00001,20) -- (axis cs:0.1,20);
\addplot[green!80!white!100, fill opacity = 0.4] fill between[of=fout and axis];

\addplot[name path=f,color=blue!80!black,line width=2pt,mark=none] table[x index=0,y index=1]{./newcurves.dat};   
   
\path[name path=axis] (axis cs:0.00001,20) -- (axis cs:0.1,20);
\addplot[blue!20] fill between[of=f and axis];

\path[name path=axis2] (axis cs:0.00001,0.9) -- (axis cs:1,1) -- (axis cs:1,20) -- (axis cs:0.1,20);
\addplot[gray!20] fill between[of=fout and axis2];

\end{semilogxaxis}
\end{tikzpicture}  
\vspace{2mm}
\caption{\label{fig:regimes0}
The outer bound from (\ref{eq:multiconverse}) holds in the blue region, which corresponds to $\beta > 2/(1-2p+p^2)$.
The inner bound from Theorem~\ref{th:multiachievability} holds above the green line, which corresponds to $\beta > -1/\log(1-\tfrac12(1-p)^2)$. 
This characterizes the capacity of the BEC shuffling channel in the blue region as $(1-q_0)(1 - p_{\text{eff}} - 1/\beta)$.
The capacity in the red region (i.e., for $\beta < 1$) is $0$ and it is unknown in the gray region.}
% \vspace{-2mm}
\end{figure}
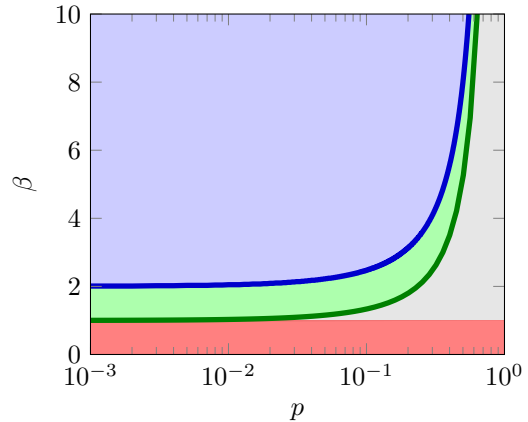

\section{Main Result}\label{main_result}

We consider the multi-draw DNA storage channel illustrated in  Figure~\ref{fig:multidraw}.
% ,  where each of  the  output  sequences  is  independently  corrupted  by  a  BEC  with  erasure  probability $p$. 
The channel input is a length-$ML$ binary string
$
X^{\ML} = \left[ X_1^\len, X_2^\len, \ldots , X_M^\len \right],
$
or, equivalently, $M$ strings of length $L$ concatenated to form a single string of length $ML$.
Each of the $M$ input strings is independently sampled $Q$ times, where $\Pr(Q  = n) = q_n$, for $n=0,1,\dots$, yielding a set of $N$  sequences.
Each of these sequences is independently passed through an binary erasure channel with erasure probability $p$, and the final $N$ sequences are shuffled out of order.
We refer to the resulting end-to-end channel as the BEC multi-draw DNA storage channel.

% The outputs of the channel are $M$ binary strings, or
% $
% Y_1^L, Y_2^L, \ldots, Y_N^L.
% $
As in previous works, we let $\len = \beta \log \M$, and consider the asymptotic regime $\M \to \infty$.
% If we choose the sequences to be any shorter that $\beta \log M, \beta>1$, no positive rate is achievable. 
%, which justifies this choice.
%Intuitively, this guarantees that each of the stored sequences is short when compared to the total amount of stored data.
%Moreover, as our 
%As shown in~\cite{DNAStorageISIT}, this is the proper regime for positive data rates to be achievable.
A rate $R$ is said to be achievable if there exists a sequence of codes, each with $2^{\M \len R}$ codewords, and whose error probability goes to zero as $\M \to \infty$.
The channel capacity $C$ is the supremum over all achievable rates.

%\blue{
We say that a code is a \emph{linear coding scheme} if the $2^{ML R}$ length-$ML$ codewords are all in the range of a $ML \times B$ binary generator matrix ${\bf G}$.
Notice that we differentiate this from a \emph{linear code}, in which case the set of codewords must be the entire range of ${\bf G}$.
%}
Our main result is the following.

% \section{Main Result}\label{main_result}

% Our results focus on the multi-draw DNA storage channel with binary erasure noise. 
% In this section, we prove that the following capacity can be achieved with linear codes.

% \iscomment{$p_{\rm eff}$ is not defined yet}

\begin{theorem}\label{th:main_result}
For $\beta>2/(1-2p+p^2)$,
the capacity of the BEC multi-draw DNA storage channel is
\begin{align}\label{eq:multicapacity}
C = (1-q_0) \left( E[C_{\text{BEC},Q}|Q\geq 1] - 1/\beta \right),
\end{align}
and it can be achieved with a linear coding scheme.
Here, $C_{\text{BEC},n}=1-p^n$ is the capacity of a BEC with $n$ draws.
\end{theorem}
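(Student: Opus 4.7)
The plan is to prove the theorem in two directions, with the principal contribution being a random linear coding achievability argument. For the converse, I would invoke the outer bound shown in the blue region of Figure~\ref{fig:regimes0} (equation~(\ref{eq:multiconverse})), which matches the claimed capacity expression precisely in the regime $\beta > 2/(1-p)^2$. For the achievability, I propose drawing a generator matrix $\mathbf{G} \in \{0,1\}^{ML \times B}$ with i.i.d.\ $\mathrm{Bernoulli}(1/2)$ entries, where $B = \lfloor R \cdot ML \rfloor$ for a rate $R$ arbitrarily close to, but below, the right-hand side of~(\ref{eq:multicapacity}). A message $\mathbf{u} \in \{0,1\}^B$ is encoded as $\mathbf{x} = \mathbf{G}\mathbf{u}$, partitioned into $M$ length-$L$ blocks, and these blocks are then passed through the multi-draw shuffling/BEC channel.

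Decoding would proceed in two stages. First, I cluster the $N$ output sequences using the consistency notion from Section~\ref{ach}: two sequences are adjacent in a consistency graph iff they agree at every coordinate where neither is erased, and clusters are taken to be its connected components. Sequences from the same input block are automatically consistent; for sequences from different input blocks, the underlying codeword blocks are independent uniform length-$L$ strings (since for $\mathbf{u} \neq \mathbf{0}$ the rows of $\mathbf{G}$ tied to distinct blocks are independent), so each cross-pair is consistent with probability $[1-(1-p)^2/2]^L$, and a suitable union bound shows that correct clustering holds with high probability whenever $\beta > 2/(1-p)^2$. Second, given correct clusters, I combine each cluster coordinatewise: a cluster of size $n$ reveals each position of its input block independently with probability $1-p^n$, yielding a collection of linear equations on $\mathbf{G}\mathbf{u}$. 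Since the cluster-to-block assignment is unknown to the decoder, I perform joint maximum-likelihood decoding over permutations $\pi$ of clusters to input positions and over messages $\mathbf{u}$: the true pair $(\pi,\mathbf{u})$ satisfies all $K$ equations by construction, while for any alternative candidate the row-independence of $\mathbf{G}$ across mismatched blocks implies that the probability of satisfying all $K$ equations is at most $2^{B-K}$.

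A union bound over the at most $M!/(M-N_c)! \leq 2^{N_c \log M}$ wrong permutations (with the $2^B$ messages absorbed into the $2^{B-K}$ bound), combined with concentration of $K$ around $ML(1-q_0)E[1-p^Q \mid Q \geq 1]$ and of the number of clusters $N_c$ around $M(1-q_0)$, shows that decoding succeeds whenever $B \lesssim K - N_c \log M$, which upon dividing by $ML$ yields the achievable rate $R < (1-q_0)(E[1-p^Q \mid Q \geq 1] - 1/\beta)$ that matches~(\ref{eq:multicapacity}). The step I expect to be the main obstacle is sharpening the clustering analysis: a naive edge-wise union bound over all $O(M^2)$ pairs requires roughly $\beta > -2/\log(1-(1-p)^2/2)$, so reaching the claimed threshold $\beta > 2/(1-p)^2$ likely calls for a more careful component-level argument or a conditional analysis that exploits within-cluster correlations. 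A secondary subtlety is that a wrong permutation may agree with the truth on a subset of input blocks; here the block-decomposition of $\mathbf{G}$ into independent row groups guarantees that every mismatched block contributes $L$ fresh uniform equations, so the $2^{B-K}$ error bound applies after restricting attention to the symmetric difference between candidate and true permutations.
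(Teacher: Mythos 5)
Your overall architecture (random linear encoding, consistency-based clustering, consensus within clusters, joint search over cluster-to-index assignments and messages, union bound giving $R < (1-q_0)(1-p_{\text{eff}}-1/\beta)$) matches the paper's, and your rate accounting is correct. However, there is one genuine gap, and it is exactly the one you flag: your decoder requires the clustering to be \emph{correct with high probability}, obtained as the connected components of the consistency graph. That requires essentially zero incorrect edges, and since the expected number of incorrect edges is of order $M^{2}\bigl(1-\tfrac12(1-p)^2\bigr)^{L} = M^{2-\gamma}$ with $\gamma = -\beta\log\bigl(1-\tfrac12(1-p)^2\bigr)$, you need $\gamma>2$, i.e.\ $\beta > -2/\log\bigl(1-\tfrac12(1-p)^2\bigr)$. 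Since $-\log_2(1-x)\le 2x$ on $(0,\tfrac12]$, this threshold is strictly \emph{larger} than $2/(1-p)^2$ for every $p\in(0,1)$, so your argument as stated does not cover the entire region in which the theorem is claimed; moreover, for $1<\gamma<2$ spurious edges genuinely do appear (a second-moment argument shows $Z$ concentrates near $M^{2-\gamma}\to\infty$), and a single spurious edge merges two clusters, corrupts the consensus, and makes the true system infeasible. No sharper ``component-level'' analysis of connected components will close this.

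The paper's resolution is a different idea rather than a sharper clustering bound: it never commits to a single clustering. The decoder enumerates \emph{all} partitions of the consistency graph into cliques (with between $(1-q_0-\epsilon)M$ and $(1-q_0+\epsilon)M$ parts) and lets the code reject the wrong ones. The key observations are Lemma~\ref{lm:clusterings} --- any clique partition is determined by the subset of edges it uses, so there are at most $2^{U}$ valid clusterings when the graph has $U$ edges --- and Lemma~\ref{lm:edges}, which under the much weaker condition $\gamma>1$ gives $U = O(M)$ with high probability (the $\Theta(M)$ correct edges from \eqref{eq:correct} dominate). The factor $2^{O(M)}$ is then absorbed harmlessly into the union bound alongside the $M^{O(M)}$ orderings and the $2^{MLR}$ messages, yielding achievability on the larger green region of Figure~\ref{fig:regimes0}, which contains the blue converse region. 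A secondary, cosmetic difference: you use a true linear code with $B=\lfloor MLR\rfloor$ and all of $\{0,1\}^B$ as messages, whereas the paper takes $B\approx ML(1-q_0)(1-p_{\text{eff}})$ larger than $MLR$, uses a random size-$2^{MLR}$ subset of $\{0,1\}^B$ as messages (a ``linear coding scheme,'' per the paper's definition), invokes Lemma~\ref{lm:full_rank} so the true system has a unique solution, and declares an error only on a collision between a wrong system's solution and another message; both bookkeepings give the same rate, but you would need to handle the dependence between $\mathbf{y}$ and $\mathbf{G}$ carefully in your $2^{B-K}$ bound, as you note.
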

%The converse can be found from the general DMC DNA storage channel capacity given in \cite[Corollary~11]{weinberger2021dna}. 

% Before showing the converse, we first introduce some definitions.
%The converse of the capacity given in Theorem \ref{th:main_result} can be proved using a genie-aided channel that gives us the correct clustering of the output strings. 
%\iscomment{this is not true, we do not know whether it is optimal to first cluster. I rewrote the paragraph below}
% Intuitively, when decoding the output strings, it is optimal to first combine all of the strings in each cluster into a ``consensus" sequence, 
% where the $i$th position of the length-$L$ consensus sequence is the $i$th symbol from any of the strings of the cluster that is not erased. 

A natural approach to deal with the output of a BEC multi-draw DNA storage channel is to  first cluster output sequences based on consistency, and then combine all of the strings in each cluster into a ``consensus" sequence, 
where the $i$th position of the length-$L$ consensus sequence is the $i$th symbol from any of the strings of the cluster that is not erased. 

If the clustering can be done successfully, 
this reduces the probability of erasure to $p^n$ for an output cluster with $n$ strings. 
Therefore, we expect that after consensus there will be $(1-q_0)M$ output strings, and for $n=1,2,\dots$, there will be $q_nM$  consensus strings with erasure probability $p^n$. 
The resulting effective channel after the consensus step is similar to the single-draw BEC case as discussed in \cite{shin}, but rather than a bit erasure probability of $p$ across all strings, here the average effective erasure probability is given as
\begin{equation}\label{eq:peff}
    p_{\text{eff}}\triangleq E[p^{Q}|Q\geq 1]=\frac{\sum_{n=1}^\infty q_np^n}{1-q_0}.
\end{equation}
Note that $E[C_{\text{BEC},Q}|Q\geq 1]= 1-E[p^{Q}|Q\geq 1]=1-p_{\text{eff}}$.
Since the capacity of BEC single-draw DNA storage channel \cite{shin} can be found to be 
\aln{
(1-q_0) (1-p - 1/\beta),
}
it is natural to conjecture that the capacity of the BEC multi-draw DNA storage channel is given by \eqref{eq:multicapacity}.

%Therefore, given the capacity result in \cite{shin}, the capacity of the BEC multi-draw shuffling-sampling channel satisfies

The converse to Theorem~\ref{th:main_result} can be found by considering a genie-aided argument where the genie reveals the true clusters, which can be used to find consensus sequences, and then using the result for the single-draw setting~\cite{shin}.
More formally, the converse can be obtained from the general DMC DNA storage channel capacity given in \cite[Corollary~11]{weinberger2021dna}. 
% Equation (31) from this corollary can be solved for BEC noise to obtain
Evaluating this result for the BEC yields
\begin{equation}\label{eq:multiconverse}
    % C_{\text{BEC, multi-draw}} 
    C = (1-q_0)(1-p_{\text{eff}}-1/\beta)
\end{equation}
for the regime $\beta > 2/(1-2p+p^2)$. This is represented by the blue area in Figure \ref{fig:regimes0}. 
% We will show that this rate is achievable for a large regime of $(p,\beta)$.  
Next, we show that for a larger set of parameters $(p,\beta)$ (given by the green region), the capacity expression in \eqref{eq:multicapacity} can be achieved with linear coding schemes.

%\begin{equation}\label{eq:multiconverse}
%    C_{\text{BEC, multi-draw}} \leq %(1-q_0)(1-p_{\text{eff}}-1/\beta)
%\end{equation}
%when $1-2p-2/\beta>0$. We will show that this outer bound is in fact the capacity.

\section{Achievability via Linear Schemes}\label{ach}

% Previous results on DNA storage capacity show that prefixing each molecule with a unique index is capacity-optimal \cite{dna_sh}. Here, 
Motivated by
the fact that  linear codes achieve the capacity of the BEC \cite{elias}, we show that linear coding schemes achieve the capacity of the BEC multi-draw DNA storage channel. 

To construct a code of rate $R$, we first populate a random binary generator matrix $\mathbf{G}$ of size $ML\times B$, for some $B$ to be determined, with i.i.d. Bernoulli(1/2) entries. We then generate $2^{MLR}$ length-$B$ random binary vectors $\mathbf{t}_i$, also with i.i.d. Bernoulli(1/2) entries. The $i$th codeword of our random code, for $i=1,\dots,2^{MLR}$, is constructed first by computing the product $\mathbf{Gt}_i$ over $\mathbb{F}_2$, and then by breaking the resulting codeword into $M$ binary strings of length $L$. The channel input of $M$ strings thus represents a single codeword. 

We will use the 
% result from Lemma \ref{lm:full_rank} 
following lemma
throughout the proofs in this section.

% \iscomment{different fonts for G?}

\begin{lemma}\label{lm:full_rank}
Let $\bf G$ be an $ML \times B$ matrix with  i.i.d.~$\Bernoulli(1/2)$ entries.
Fix any $\delta \in  (0,1)$ and a submatrix $\bf G'$ formed by an arbitrary set of $(1-\delta)B$ rows of $\textbf{G}$.
Then $\bf G'$ is full rank (over the finite field $\mathbb{F}_2$) with probability tending to $1$ as $B \to \infty$.
\end{lemma}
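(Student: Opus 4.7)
The plan is to reduce the question to a classical fact about random binary matrices. Since the entries of $\mathbf{G}$ are i.i.d., the distribution of any submatrix $\mathbf{G}'$ formed by a fixed set of $k = (1-\delta)B$ rows is again a matrix of i.i.d.\ $\Bernoulli(1/2)$ entries. In particular, the identity of the chosen rows is immaterial---only the number $k$ of rows matters.

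Next I would note that because $k = (1-\delta)B < B$, saying $\mathbf{G}'$ is full rank means it has full row rank $k$, i.e., its $k$ rows are linearly independent vectors in $\mathbb{F}_2^B$. I would analyze this by a standard sequential union-bound argument. Label the rows $R_1,\dots,R_k$ and let $E_i$ be the event that $R_i$ lies in $\mathrm{span}(R_1,\dots,R_{i-1})$. Linear dependence of the rows is exactly the event $\bigcup_{i=1}^k E_i$. Conditioning on $R_1,\dots,R_{i-1}$, their span has at most $2^{i-1}$ elements, and $R_i$ is uniformly distributed on $\mathbb{F}_2^B$ independently of them, so
\[
\Pr[E_i \mid R_1,\dots,R_{i-1}] \;\le\; \frac{2^{i-1}}{2^B}.
\]

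Summing via the union bound would then give
\[
\Pr[\mathbf{G}' \text{ not full rank}] \;\le\; \sum_{i=1}^{k} 2^{i-1-B} \;\le\; 2^{k-B} \;=\; 2^{-\delta B},
\]
which tends to $0$ as $B \to \infty$, establishing the claim. Honestly, there is no real obstacle here: the statement is a textbook fact about uniform random matrices over $\mathbb{F}_2$, and the only mild subtlety worth stating carefully is that the conclusion does not require a union bound over the $\binom{ML}{(1-\delta)B}$ choices of row-subset (the lemma is stated for an arbitrary but \emph{fixed} subset), so it suffices to analyze one submatrix whose entries are i.i.d.\ $\Bernoulli(1/2)$.
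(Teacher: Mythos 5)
Your proof is correct, and it takes a slightly different route from the paper's. The paper computes the \emph{exact} probability that the $(1-\delta)B$ rows are linearly independent as the telescoping product $\prod_{j=1}^{(1-\delta)B}\bigl(1-2^{-(B-j+1)}\bigr) = \prod_{i=1}^{B}(1-2^{-i})\big/\prod_{i=1}^{\delta B}(1-2^{-i})$, and then argues that numerator and denominator both converge to the same positive constant $\prod_{i=1}^{\infty}(1-2^{-i})\approx 0.28879$, so the ratio tends to $1$. You instead union-bound the failure probability: $\Pr[\text{not full rank}]\le\sum_{i=1}^{k}2^{i-1-B}\le 2^{-\delta B}$. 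Both arguments rest on the same core fact (a fresh uniform row avoids the span of the previous $i-1$ rows, which has at most $2^{i-1}$ elements), but your version is more elementary and more quantitative --- it gives an explicit exponential rate of convergence and sidesteps the convergence of the infinite product, which the paper only justifies informally (``can be verified, e.g., using numerical software''). What the paper's exact computation buys in exchange is the side observation that for $\delta=0$ the probability does \emph{not} tend to $1$ but to $0.28879\ldots$, highlighting why the slack $\delta>0$ is needed; your one-sided bound cannot show that. Your closing remark that no union bound over the $\binom{ML}{(1-\delta)B}$ row-subsets is needed, since the lemma fixes the subset, is accurate and matches how the paper uses the lemma.
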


\begin{figure}[!t]
    \centering
    \includegraphics[width=0.9\linewidth]{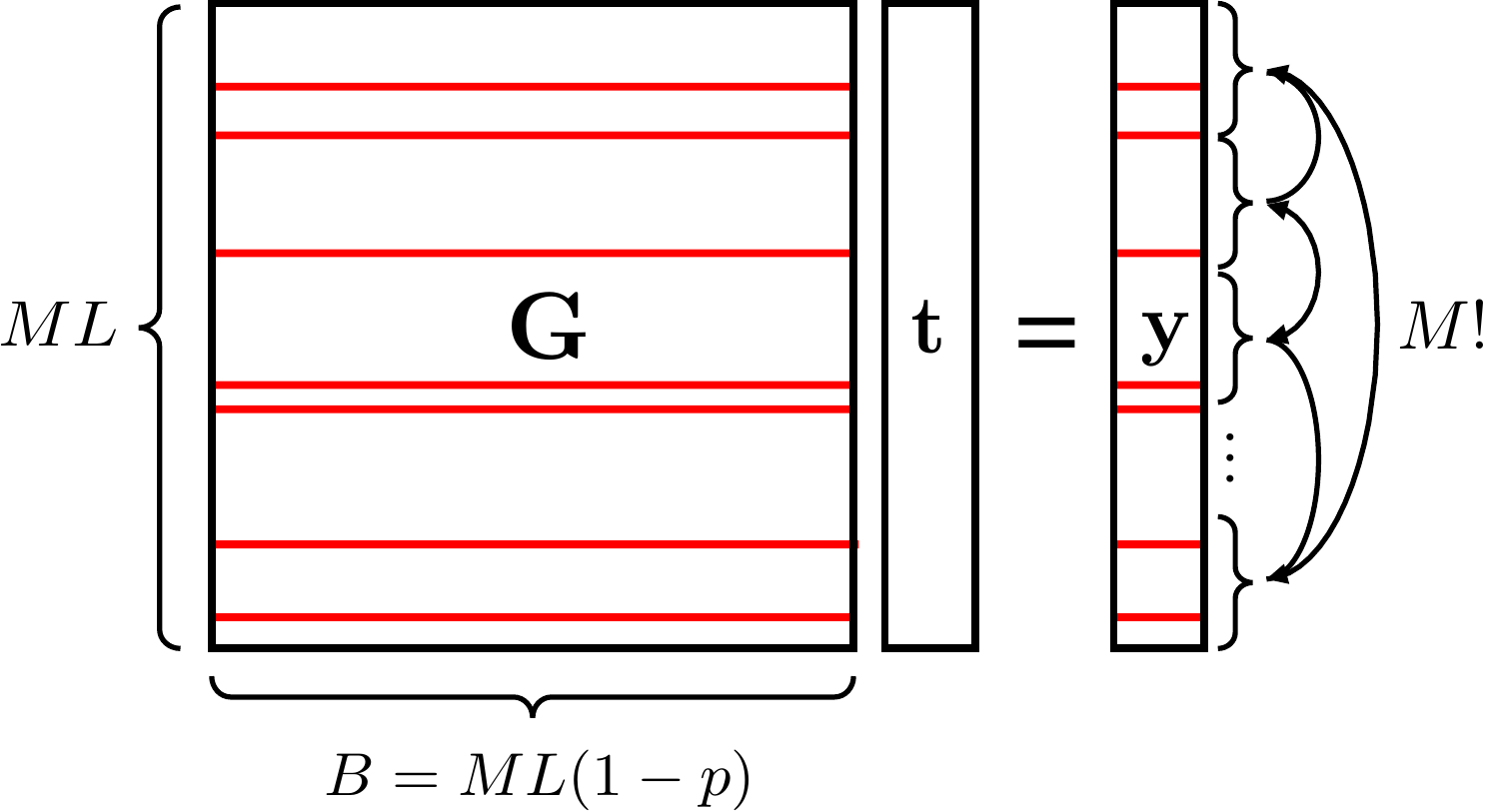}
    \caption{In the setting where each sequence is observed exactly once at the output, there are $M!$ potential systems of equations, each with $ML(1-p)$ non-erased equations. Choosing our rate to be $\approx 1-p-1/\beta$ guarantees that, with high probability, only one of these systems will have a solution that corresponds to a codeword.}
    \label{fig:system}
\end{figure}

\subsection{Single-draw case} 
\label{sec:singledraw}

We first illustrate the linear coding scheme by considering the case where each string is sampled exactly once, i.e., $q_1=1$. At the output of this system, $N=M$ strings are observed, which we wish to use to recover the stored message $\mathbf{t}_i$.

If the correct ordering of the $M$ output strings were known, we would be able to concatenate them into a length-$ML$ vector $\mathbf{y}$ and try to solve the system $\mathbf{Gt}=\mathbf{y}$ for $\mathbf{t}$. With erasure probability $p$, in expectation there are $ML(1-p)$ non-erased positions in $\mathbf{y}$, so a unique solution to this system exists with high probability as long as the number of remaining equations, which is roughly $ML(1-p)$, is greater than or equal to the number of variables $B$. Therefore, $B$ can be set to $ML(1-p-\epsilon)$ for any $\epsilon>0$ and all binary strings in $\{0,1\}^B$ may be used as message vectors.
% Hence this code achieves the capacity of a BEC with rate $R=1-p-\epsilon$.

However, the correct ordering of the output strings is not actually known, so we consider all $M!$ possible orderings. With each ordering, we have a distinct concatenated vector $\mathbf{y}$ with a $p$ fraction of erasures, as well as a $p$ fraction of useless equations in $\mathbf{G}$, as shown in Figure \ref{fig:system}. From Lemma \ref{lm:full_rank}, if we set $B=ML(1-p-\epsilon)$, then the true ordering of remaining equations will have a unique solution.

In addition, we must have only one feasible ordering of the output strings; that is, out of the $M!$ possible systems of equations $\mathbf{Gt}=\mathbf{y}$, only one of them (the true one) should have a solution $\mathbf{t}$ that is one of the original message vectors $\mathbf{t}_i$, for $i=1,\dots,2^{MLR}$. 
Assume without loss of generality that  message 1 is sent (i.e., the codeword sent is $\mathbf{Gt}_1$). Since the messages are generated i.i.d. from $\{0,1\}^B$, by the union bound, the probability that there is a collision between the solution $\mathbf{t}$ of one of the $M!-1$ incorrect systems and one of the other messages $\mathbf{t}_i$, $i=2,\dots,2^{MLR}$ is at most
\begin{align*}
(M! - 1) 2^{MLR} 2^{-B} &< 2^{M \log M + MLR - B}\\
&= 2^{ML[ R - (1-p-\epsilon  -1/\beta)]}.    
\end{align*}
Therefore, by choosing $\epsilon$ arbitrarily small, we conclude that any rate $R<1-p-1/\beta$ can be achieved with vanishingly small error probability.

\begin{figure}[!t]
    \centering
    \includegraphics[width=0.8\linewidth]{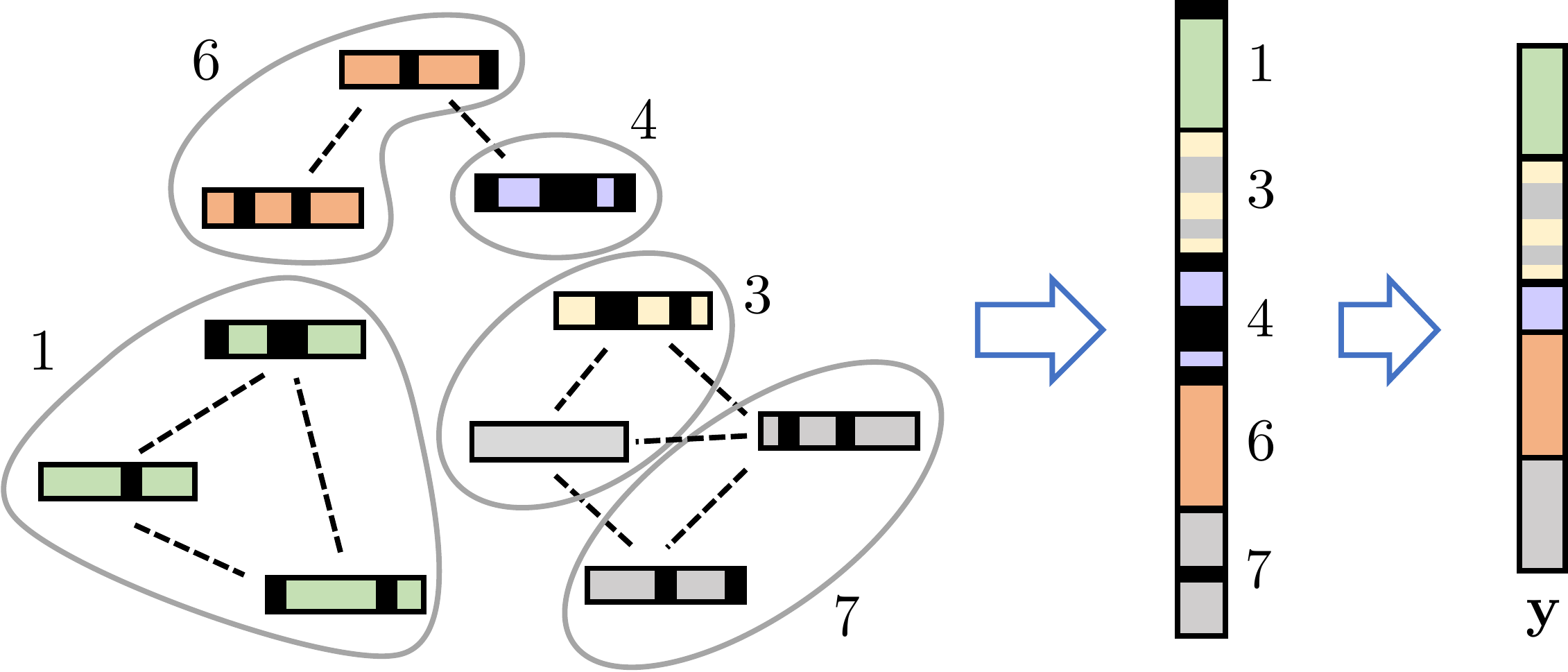}
    \caption{The decoder first builds a consistency graph between all received sequences, and considers all possible partitions of the graph into cliques, where the total number of  cliques is between $p_LM$ and $p_UM$.
     For each such valid clustering, the decoder considers all possible assignments of the indices $\{1,...,M\}$ to the clusters and uses those indices to order the consensus sequences of each cluster to form the vector 
    $\mathbf{y}$.
    After removing erased entries and the rows of $\mathbf{G}$ corresponding to erased/missing rows in $\mathbf{y}$, the system $\mathbf{G} \mathbf{t} = \mathbf{y}$ can be solved.}
    \label{fig:clustering-decoding}
\end{figure}

\subsection{Multi-draw case}

We now consider the channel in Figure \ref{fig:multidraw} with a general sampling distribution $Q$, i.e., each input string $x_i^L$ is sampled $N_i\sim Q$ times. We now expect to see $E[N_1]M$ output strings, and we would like to cluster them into roughly $(1-q_0)M$ clusters. This clustering task becomes easier under the BEC setting, as we can take advantage of the consistency of the strings. Notice that any two output strings that originated from the same input string are consistent. Therefore, given $N$ output strings $x_1^L,\dots,x_N^L$, one can construct an undirected graph with the strings as vertices and edges between any two consistent strings. We refer to this graph as a \textit{consistency graph}. A clustering of these output strings is valid if it corresponds to a partition of $\{x_1^L,\dots,x_N^L\}$ such that each group corresponds to a \textit{clique} in the consistency graph. 

With the probability of an input string not having an output cluster equal to $q_0$, in expectation there are $(1-q_0)M$ output clusters. From Hoeffding's inequality, the probability that there are more than $p_UM\coloneqq(1-q_0+\epsilon)M$ or fewer than $p_LM\coloneqq(1-q_0-\epsilon)M$ output clusters can be bounded as 
\begin{align*}
    \Pr\left( \left| \text{\# true output clusters} - (1-q_0)M \right| > \epsilon M \right) < 2e^{-2M\epsilon^2},
\end{align*}
which tends to 0 as $M\to\infty$ for any $\epsilon>0$. The task of the decoder is then to cluster the $N$ output strings into between $p_LM$ and $p_UM$ clusters for some small fixed $\epsilon$; here, the decoder will consider all valid clusterings that create between $p_LM$ and $p_UM$ clusters. For each clustering, the decoder first performs a consensus step, effectively converting the $N$ clustered output strings into $(1-q_0)M$ strings with a smaller erasure rate. From here, the decoder proceeds similarly to the single-draw case. Each cluster is assigned a distinct label from $\{1,\dots,M\}$, and the clusters are ordered by label to create a single output vector $\mathbf{y}$ of length roughly $(1-q_0)ML$. All of the at most $M!$ possible label assignments are considered. The system of equations corresponding to each label assignment and vector $\mathbf{y}$ can be solved for a solution $\mathbf{t}$, if a solution exists. This cluster-based decoding scheme is illustrated in Figure \ref{fig:clustering-decoding}. 

Again, we must choose $B$ large enough so that the true system, obtained by clustering and ordering the output strings correctly, has a unique solution. Additionally, we must have $R$ small enough so that only one of the systems (the true one) yields a solution that corresponds with a valid message vector $\mathbf{t}_i$ so that the correct message can be decoded. 

To guarantee a unique solution, the true system must have enough equations after erasures have been discarded. 
After the consensus step has been performed on the output clusters, we have at least $p_LM$ output strings and an expected effective erasure probability of $p_{\text{eff}}$. 
It can be shown using standard concentration inequalities that the effective erasure probability cannot deviate significantly from $p_{\text{eff}}$. 
Thus, the true system will have at least $MLp_L(1-p_{\text{eff}}-\epsilon)$ equations with high probability, and if we set $$B=MLp_L(1-p_{\text{eff}}-\epsilon)(1-\epsilon),$$
then by Lemma \ref{lm:full_rank}, the true system of equations will have a unique solution with probability tending to 1 as $M\to\infty$. 

To find the maximum rate $R$ for which this scheme succeeds with vanishing error probability, we must bound the total number of valid clusterings of the output strings. 
We begin by analyzing the total number of edges in the consistency graph. 
Since each cluster of size $n$ produces $\binom{n}{2}\leq n^2/2$ ``correct'' edges (i.e., edges between output strings that originated from the same input string), the expected number of correct edges is at most 
\begin{align} \label{eq:correct}
\sum_{n=0}^\infty M q_n \frac{n^2}{2} = \frac{M}{2} E[Q^2].
\end{align}
% where $E[Q^2]$ is the second moment of the sampling distribution $Q$. 

Now let $Z$ be the total number of incorrect edges in the consistency graph, and define 
\begin{align*}
    \gamma \coloneqq - \beta \log \left( 1-\tfrac12(1-p)^2 \right),
\end{align*}
which is positive for any $p\in(0,1)$. 
\begin{lemma}
\label{lm:edges}
The number of incorrect edges $Z$ satisfies
\begin{align}
\Pr\left(Z > M^{2-\gamma + \epsilon}\right) \to 0\end{align}
as $M \to \infty$, for any $\epsilon > 0$. 
\end{lemma}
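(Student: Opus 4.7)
The plan is to bound the expected number of incorrect edges $E[Z]$ and then apply Markov's inequality. An incorrect edge connects two output sequences that originate from different input sequences but happen to agree on every jointly non-erased position. I would first argue that any two such output sequences are statistically independent: with the random linear code construction, for any fixed nonzero message $\mathbf{t}_i$, the codeword $\mathbf{G}\mathbf{t}_i$ has i.i.d.\ $\Bernoulli(1/2)$ entries (since the rows of $\mathbf{G}$ are independent and uniform), so the $M$ length-$L$ blocks of a codeword are i.i.d.\ uniform. Since the BEC noise is applied independently to each drawn copy, two output strings descended from distinct input blocks are independent, uniformly-random erased binary strings.

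Next I would compute the per-position consistency probability. Conditioning on the two erasure patterns, two outputs disagree at a position only when both are unerased and the underlying bits differ, so a single position is consistent with probability
\begin{align*}
q \;=\; p^2 + 2p(1-p) + (1-p)^2 \cdot \tfrac12 \;=\; 1 - \tfrac{1}{2}(1-p)^2.
\end{align*}
Because all positions are independent, the probability that two such strings are consistent over the entire length $L = \beta \log M$ is
\begin{align*}
q^L \;=\; \bigl(1-\tfrac{1}{2}(1-p)^2\bigr)^{\beta \log M} \;=\; M^{-\gamma}.
\end{align*}

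To bound the number of candidate incorrect edges, note there are $M$ input strings, each independently drawn $N_i \sim Q$ times, so the number of ordered pairs of output strings descended from distinct input strings is $\sum_{i\neq j} N_i N_j$, with expectation $M(M-1) E[Q]^2 \le M^2 E[Q]^2$. Combining with the per-pair consistency probability by linearity of expectation yields
\begin{align*}
E[Z] \;\le\; \tfrac{1}{2}\, M^2\, E[Q]^2 \cdot M^{-\gamma} \;=\; O\bigl(M^{2-\gamma}\bigr).
\end{align*}
Markov's inequality then gives $\Pr(Z > M^{2-\gamma+\epsilon}) \le E[Z]/M^{2-\gamma+\epsilon} = O(M^{-\epsilon}) \to 0$.

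The main subtlety I expect is the independence claim used in the per-position calculation. We are analyzing the consistency graph under the random code ensemble, so I would emphasize that independence of the two candidate output strings hinges on the uniformity of $\mathbf{G}\mathbf{t}_i$, which in turn requires a brief argument that the nonzero message case dominates (the all-zero message contributes negligibly, or can be excluded without affecting the achievable rate). A secondary issue is ensuring $E[Q^2] < \infty$, or at least that the distribution of $Q$ satisfies the mild moment conditions implicit in the capacity statement; under the standing assumption on $Q$ in the paper, this is immediate and no further concentration argument on $N$ is needed since the Markov bound already handles the randomness in the $N_i$.
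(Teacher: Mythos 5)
Your proposal is correct and follows essentially the same route as the paper: per-pair consistency probability $\left(1-\tfrac12(1-p)^2\right)^L = M^{-\gamma}$ from i.i.d.\ uniform bits, an $O(M^2)$ bound on the expected number of cross-cluster pairs, and Markov's inequality. The only divergence is the sub-argument for why the two blocks are independent uniform strings --- you randomize over $\mathbf{G}$ for a fixed nonzero message (needing only the trivially negligible exclusion of $\mathbf{t}=\mathbf{0}$), whereas the paper randomizes over the message and invokes full row-rank of a stacked generator submatrix; both are valid, and yours is arguably the cleaner of the two.
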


From Lemma \ref{lm:edges}, we can see that as long as $\gamma > 1$, the number of incorrect edges grows slower than $M$, and will therefore be vanishingly small compared to the number of correct edges given in \eqref{eq:correct}. 

\begin{figure}[!t]
    \centering
    \includegraphics[width=0.9\linewidth]{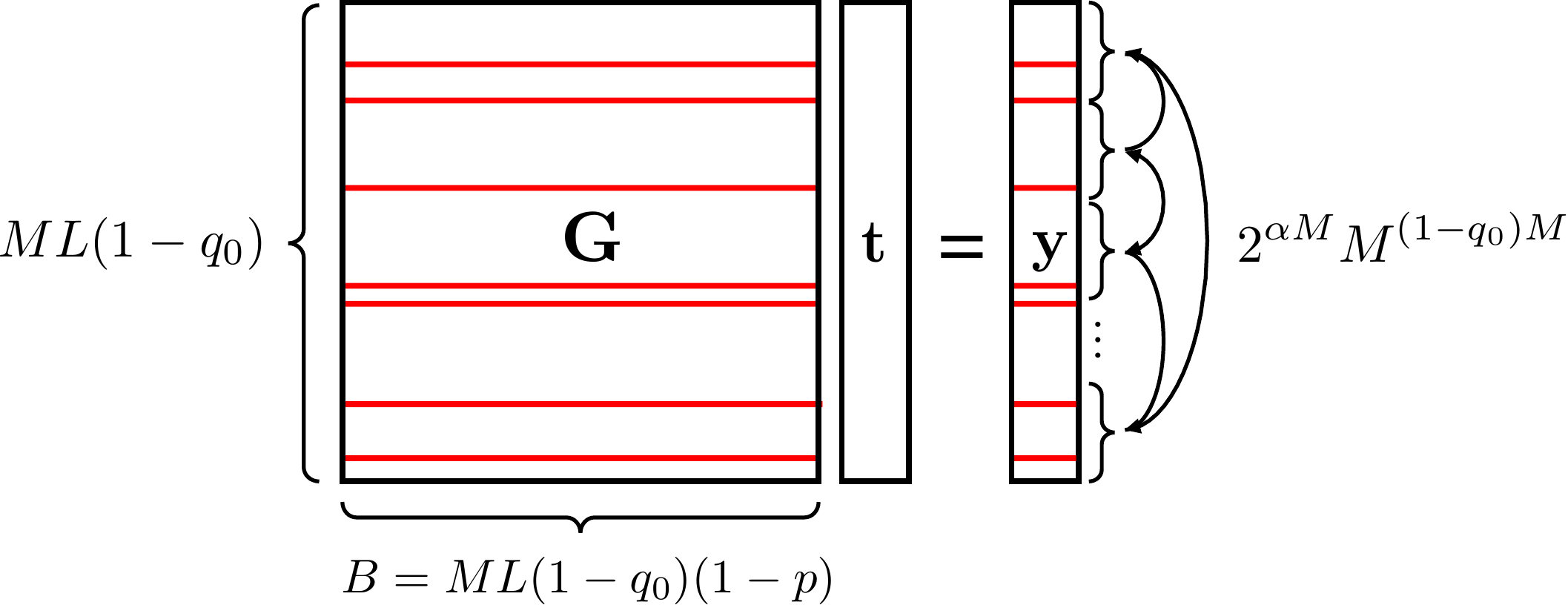}
    \caption{In the multi-draw setting, there are $2^{\alpha M} M^{(1-q_0)M}$ potential systems of equations, obtained by considering all valid ways to cluster the output strings into $(1-q_0)M$ clusters, and then assigning each cluster to an index.
     The true system (corresponding to the correct clustering and ordering) has $ML(1-q_0)(1-p_{\text{eff}})$ non-erased equations in expectation.}
    \label{fig:system_multi}
\end{figure}

We now bound the number of possible matchings given the total number of edges and find that a loose bound is sufficient to establish capacity.

\begin{lemma} \label{lm:clusterings}
Suppose the consistency graph has a total of $U$ edges.
Then there are at most $2^U$ valid ways to cluster the output sequences.
\end{lemma}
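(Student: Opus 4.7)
\begin{proofsketch}
The plan is to exhibit an injection from the set of valid clusterings into the power set of the edge set of the consistency graph, which will immediately yield the bound of $2^U$.

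Given any valid clustering, each cluster is by definition a clique, so I would associate to the clustering the set $S$ of edges that lie within some cluster, i.e., the union over all clusters of the edges internal to that clique. This set $S$ is a subset of the $U$ edges of the consistency graph, so there are at most $2^U$ possible values of $S$.

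The key step is to verify that this map is injective, i.e., that the clustering can be uniquely recovered from $S$. Given $S$, consider the subgraph $(V, S)$: by construction, its connected components are exactly the non-singleton clusters of the original clustering, and every vertex not incident to any edge of $S$ must have been a singleton cluster. Hence the full clustering is determined by $S$, proving injectivity.

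The only subtlety is handling singleton clusters (which contribute no edges and thus could a priori appear to be ambiguous), but this is resolved by the observation above that isolated vertices in $(V, S)$ correspond precisely to singleton clusters. I do not expect a genuine obstacle here; the lemma is a loose combinatorial bound whose only purpose is to be subexponential compared with the $2^{U}$-type terms controlled via Lemma~\ref{lm:edges}.
\end{proofsketch}
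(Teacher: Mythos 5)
Your proof is correct and takes essentially the same approach as the paper: both map each valid clustering to the subset of the $U$ edges that lie inside its cliques and bound the number of such subsets by $2^U$. Your writeup is in fact slightly more careful, since you explicitly verify injectivity by recovering the non-singleton clusters as the connected components of $(V,S)$ and the singletons as the isolated vertices, a step the paper's proof leaves implicit.
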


\begin{proof}
Each valid clustering corresponds to a partition of the output sequences such that each group corresponds to a clique in the consistency graph.
Notice that a partition of the consistency graph into cliques is uniquely described by the set of edges that are part of each of the cliques.
Hence, each partition corresponds to a distinct subset of the $U$ edges in the graph.
Since there are at most $2^U$ such subsets, it follows that there are at most $2^U$ valid ways to cluster the output sequences.
\end{proof}

Following \eqref{eq:correct} and Lemma \ref{lm:edges}, we know that for some $\gamma > 1$, the number of edges $U$ in the consistency graph satisfies $U<\alpha M$ with high probability, for some $\alpha>1$. Thus, Lemma \ref{lm:clusterings} implies that the number of ways to cluster the output sequences is at most $2^{\alpha M}$. 

Now, similar to the single-draw achievability proof (Section~\ref{sec:singledraw}), we must guarantee that there is no collision between the solution found $\mathbf{t}$ and any incorrect codeword $\mathbf{t}_i$, for $i=2,\dots,2^{MLR}$ (assuming message 1 is sent). The decoder must attempt to solve a total of at most $2^{\alpha M}M^{p_UM}$
systems of equations, as we need to cluster the output strings into at most $p_UM$ valid clusters and assign each cluster an index in $\{1,\dots,M\}$ for the ordering. 
%Defining $\tilde{q}_0\coloneqq (1-q_0+\epsilon)$ for ease of notation, 
The probability of a collision is then upper-bounded by
\begin{align}
    \label{eq:errormultidraw}
2^{\alpha M} M^{p_UM} 2^{MLR} 2^{-B} & = 2^{\alpha M + p_U M \log M + MLR - B} \nonumber \\
& = 2^{ML( \alpha/L + p_U/\beta + R - B/(ML))},
\end{align}
which goes to 0 as $M\to\infty$ as long as
\begin{align*}
    R + \frac{\alpha}{\beta \log M} - p_L(1-p_{\text{eff}} -\epsilon)(1-\epsilon) + \frac{p_U}{\beta} < 0.
\end{align*}
Since $\alpha/(\beta\log M)\to 0$ and $\epsilon$ can be chosen arbitrarily small, 
\begin{align*}
    R < (1-q_0)(1-p_{\text{eff}} - 1/\beta)
\end{align*}
is achievable, and we have
the following capacity lower bound:

\begin{theorem} \label{th:multiachievability}
The capacity of the BEC multi-draw DNA storage channel satisfies
\begin{align}
    C_{\text{BEC, multi-draw}} \geq (1-q_0) \left( 1 - p_{\text{eff}} - 1/\beta \right),
\end{align}
as long as $\gamma = - \beta \log \left( 1-\tfrac12(1-p)^2 \right) > 1$.
\end{theorem}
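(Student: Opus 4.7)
The plan is to assemble the ingredients developed earlier in Section~\ref{ach} into a union-bound argument that mirrors the single-draw analysis of Section~\ref{sec:singledraw} but absorbs the additional combinatorial overhead introduced by clustering and ordering. Fix $\epsilon > 0$, set $B = M L p_L (1-p_{\text{eff}} - \epsilon)(1-\epsilon)$, and assume without loss of generality that message $1$ is transmitted. The error event is the union, over all incorrect candidate systems $(\mathbf{G}', \mathbf{y}')$ obtained from some valid clustering into between $p_L M$ and $p_U M$ cliques of the consistency graph and some ordering of those clusters, of the event that this system admits a solution $\mathbf{t}$ equal to some $\mathbf{t}_i$ with $i \neq 1$.

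First I would verify that the true system (induced by the correct clustering and ordering) recovers $\mathbf{t}_1$ with probability tending to $1$. Hoeffding's inequality already shows that the true number of non-empty clusters lies in $[p_L M, p_U M]$ with high probability; a similar concentration argument applied to the per-position erasure indicators after consensus shows that the effective erasure rate concentrates around $p_{\text{eff}}$. Thus the true system has at least $M L p_L(1-p_{\text{eff}}-\epsilon)$ non-erased equations, and Lemma~\ref{lm:full_rank} ensures that the restriction of $\mathbf{G}$ to any such collection of rows is full column rank with probability $1 - o(1)$, so that the unique solution must be the transmitted $\mathbf{t}_1$.

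The main step is the union bound. By \eqref{eq:correct} and Lemma~\ref{lm:edges}, under the hypothesis $\gamma > 1$ the total number of edges $U$ in the consistency graph satisfies $U \leq \alpha M$ with high probability for some constant $\alpha > 0$. Lemma~\ref{lm:clusterings} then bounds the number of valid clusterings by $2^U \leq 2^{\alpha M}$, and the number of cluster-to-index assignments is trivially at most $M^{p_U M}$. For each incorrect candidate system the rows of $\mathbf{G}'$ are independent of the random messages, so the probability that its solution coincides with a specific $\mathbf{t}_i$ is at most $2^{-B}$. Union-bounding over all at most $2^{\alpha M} M^{p_U M}$ incorrect systems and all $2^{M L R}$ messages yields exactly the expression \eqref{eq:errormultidraw}, which tends to $0$ whenever
\begin{equation*}
R < p_L(1-p_{\text{eff}}-\epsilon)(1-\epsilon) - p_U/\beta - \alpha/(\beta\log M).
\end{equation*}
Letting $\epsilon \downarrow 0$ and $M \to \infty$ yields the stated bound $R < (1-q_0)(1 - p_{\text{eff}} - 1/\beta)$.

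The main obstacle, and the step that ultimately fixes the green achievability region in Figure~\ref{fig:regimes0}, is the control over the number of valid clusterings. The naive bound $2^U$ from Lemma~\ref{lm:clusterings} is extremely loose, and it is only salvaged by the fact that, when $\gamma > 1$, the number of spurious (incorrect) consistency edges is sub-linear in $M$, so that $U$ itself grows only linearly in $M$ rather than polynomially. Everything else — rank of $\mathbf{G}'$, concentration of the number of clusters, concentration of the effective erasure rate, and independence of distinct $\mathbf{t}_i$'s — is standard and can be handled with Hoeffding-type arguments and Lemma~\ref{lm:full_rank}.
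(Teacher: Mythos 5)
Your proposal is correct and follows essentially the same route as the paper: the same choice of $B = MLp_L(1-p_{\text{eff}}-\epsilon)(1-\epsilon)$, the same use of Lemmas~\ref{lm:full_rank}, \ref{lm:edges}, and \ref{lm:clusterings} to guarantee a unique solution for the true system and to bound the number of valid clusterings by $2^{\alpha M}$ when $\gamma>1$, and the same union bound over clusterings, orderings, and messages leading to \eqref{eq:errormultidraw}. The only difference is cosmetic: you spell out slightly more explicitly the concentration of the effective erasure rate and the correctness of the true system, which the paper states in passing.
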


Since the parameter regime required for the upper bound in Equation \ref{eq:multiconverse} is smaller than that required for the lower bound, we have that Theorem \ref{th:main_result} holds for the smaller regime $\beta > 2/(1-2p+p^2)$ 
%\iscomment{is this still the blue region? or needs to be updated?}
(the blue region in Figure \ref{fig:regimes0}).

\section{Discussion}
A natural follow-up to this work is to investigate whether the random linear coding scheme here presented also achieves the capacity of the BSC multi-draw DNA storage channel.
This is reasonable, as linear codes are also known to achieve the capacity of a BSC~\cite{elias}.
% potential extension of this topic is the use of linear codes for the multi-draw DNA storage channel with binary symmetric noise rather than erasures. 
However, this problem is more complicated than the case of the BEC, as the concept of consistency does not exist with bit substitution errors. 
% Additionally, an index-based method (e.g. \cite{dna_sh}) cannot be used with clustering. 

Note that the clustering algorithm covered here is ``code-aware''; the decoder considers every feasible clustering of the output strings and only permits a clustering whose corresponding linear equations have a unique, valid solution (message index). 
Hence, another natural follow-up question is whether a code-oblivious clustering algorithm is also sufficient to achieve capacity.
Code-oblivious clustering approaches are more desirable as they provide a separation between the clustering and decoding tasks.
Notice that the greedy clustering algorithm used in \cite{lenz_ach} for the BSC case is code-oblivious. 

% It can be shown that a ``code-oblivious'' clustering algorithm, which clusters the strings without considering valid codewords, achieves the same capacity as when the clustering and decoding steps are combined for the same parameter region, which aligns with the result found in \cite{lenz_ach}. This approach involves counting the number of total and incorrect clusters and showing that the number of incorrect clusters does not grow faster than the number of input strings $M$. 

Finally, we remark that the capacity for a large set of parameters $(p,\beta)$ is still an open question. 
Weinberger and Merhav \cite{weinberger2021dna} provide the capacity for a general DMC,
which in the BEC case corresponds to the blue region in Figure~\ref{fig:regimes0}.
Characterizing the rates achieved by linear coding schemes and code-aware clustering in the gray region in Figure~\ref{fig:regimes0} is another direction for future work.

% sequencing for a large regime of $\beta$; for the BEC case, the $(p,\beta)$ regime is larger than $\gamma>1$. 

%%%%%%
%% Appendix:
%% If needed a single appendix is created by
%%
%\appendix
%%
%% If several appendices are needed, then the command
%%
% \appendices
%%
%% in combination with further \section-commands can be used.
%%%%%%

\appendices

\section{Proof of Lemma \ref{lm:full_rank}}

\begin{lemmarep}{\ref{lm:full_rank}}
Let $\mathbf{G}$ be an $ML \times B$ matrix with  i.i.d.~$\Bernoulli(1/2)$ entries.
Fix any $\delta \in  (0,1)$ and a submatrix $\mathbf{G}'$ formed by an arbitrary set of $(1-\delta)B$ rows of $\mathbf{G}$.
Then $\mathbf{G}'$ is full rank (over the finite field $\mathbb{F}_2$) with probability tending to $1$ as $B \to \infty$.
\end{lemmarep}

\begin{proof}
We follow the approach in the lecture notes by \cite{sayir_notes}.
In order for $\mathbf{G}'$ to be full rank, the $(n+1)$th row must be chosen as a vector that is not in the span of rows $1,\dots,n$.
Note that the space spanned by $n$ linearly independent vectors in $\mathbb{F}_2$ has  exactly $2^n$ distinct elements.
If we assume that the first $n$ rows are linearly independent, then the probability that the $(n+1)$th row (which is a $B$-dimensional vector) is not in the span of the first $n$ rows is $1- 2^{n-B}$.
By induction we see that the probability that all $(1-\delta)B$ rows are linearly independent is 
\begin{align}
    \prod_{j=1}^{(1-\delta)B} \left(1-2^{-(B-j+1)}\right) 
& = \prod_{i=\delta B + 1}^{B} \left(1-2^{-i}\right) \nonumber \\
& = \frac{\prod_{i=1}^{B} \left(1-2^{-i}\right)}{\prod_{i=1}^{\delta B} \left(1-2^{-i}\right) }. \label{eq:lemlimit}
\end{align}
As $B \to \infty$, both the product in the numerator and in the denominator can be verified (e.g., using numerical software) to converge to
\begin{align*}
    \prod_{i=1}^{\infty} \left(1-2^{-i}\right) = 0.28879...    
\end{align*}
which implies that (\ref{eq:lemlimit}) tends to $1$ as $B \to \infty$, proving the lemma.
Notice that, if $\delta = 0$, the probability does not tend to $1$ and instead tends to $0.28879$, which is in contrast to the case of a real-valued matrix with random entries from a continuous distribution, where all square submatrices will be full-rank with high probability.
\end{proof}

\section{Proof of Lemma \ref{lm:edges}}

\begin{lemmarep}{\ref{lm:edges}}
The number of incorrect edges $Z$ satisfies
\begin{align}
    \Pr\left(Z > M^{2-\gamma + \epsilon}\right) \to 0
\end{align}
as $M \to \infty$, for any $\epsilon > 0$. 
\end{lemmarep} 

\begin{proof}
Consider two output strings $y^L_i$ and $y^L_j$ that are generated from distinct input strings $x^L_i$ and $x^L_j$.
We show that $y^L_i$ and $y^L_j$ are consistent 
with probability $M^{-\gamma}$.

Let $x^L_i[\ell]$ and $x^L_j[\ell]$, for $\ell=1,\ldots,L$ be the individual symbols in the sequences.
Notice that $x^L_i$ and $x^L_j$ are generated by choosing $\mathbf{t}_i$ and $\mathbf{t}_j$ uniformly at random from $\{0,1\}^B$, and computing $\mathbf{G}'\, \mathbf{t}_i$ and  $\mathbf{G}''\, \mathbf{t}_j$, where $\mathbf{G}'$ and $\mathbf{G}''$ are each obtained by taking the $L$ rows from $\mathbf{G}$ corresponding to the $i$th and $j$th input sequences (note that $\mathbf{G} \mathbf{t}_i$ has length $ML$).

First we claim that the $2L$ random variables $x^L_i[\ell],x^L_j[\ell]$, $\ell=1,\ldots,L$, are mutually independent Bernoulli$(1/2)$.
Treating all vectors as column vectors, we can write
\begin{align}\label{eq:xgt}
    \begin{bmatrix}
x^L_i \\ x^L_j
\end{bmatrix} = 
\underbrace{
\begin{bmatrix}
\mathbf{G}' & 0 \\ 0 & \mathbf{G}''
\end{bmatrix}}_{H}
\underbrace{
\begin{bmatrix}
\mathbf{t}_i \\ \mathbf{t}_j
\end{bmatrix}}_{\tilde{\bf t}}.
\end{align}
The block diagonal matrix $H$ above has dimension $2L \times 2B$, where $B = ML (1-q_0 -\epsilon)(1-p_{\text{eff}}-\epsilon)(1-\epsilon)$.
For $M$ large enough, we have $B > L$, and $H$ is full row-rank, with a null space of dimension $2B - 2L$.
Hence, for any 
$\mathbf{c} \in \mathbb{F}_2^L$, the number of solutions $\tilde{\mathbf{t}}$ to $\mathbf{c} = H {\tilde{\mathbf{t}}}$ is $2^{2B-2L}$ and, if ${\tilde{\mathbf{t}}}$ is drawn uniformly at random from $\mathbb{F}_2^{2B}$,
\begin{align*}
    \Pr(H {\tilde{\mathbf{t}}} = {\mathbf{c}}) = \frac{2^{2B-2L}}{2^{2B}} = 2^{-2L}.    
\end{align*}
This implies that the column vector $\begin{bmatrix}
x^L_i \\ x^L_j
\end{bmatrix}$ is chosen uniformly at random from $\mathbb{F}_2^{2L}$.
This in turn implies that the entries of $x^L_i$ and $x^L_j$ are all mutually independent  i.i.d.~$\Bernoulli(1/2)$ random variables.

Given this fact, the event that $y^L_i$ and $y^L_j$ are consistent is the intersection of $L$ independent events
\begin{align*}
    \{ x^L_i[\ell] = x^L_j[\ell] \text{ or } x^L_i[\ell] = \varepsilon \text{ or } x^L_j[\ell] = \varepsilon \},    
\end{align*}
for $\ell=1,\ldots,L$.
Each of these events happens with probability
$1-\tfrac12(1-p)^2$,
implying that $x^L_i$ and $x^L_j$ are consistent with probability 
\begin{align*}
    (1-\tfrac12(1-p)^2)^L = 2^{-\gamma \log M} = M^{-\gamma}.
\end{align*}
Finally, we notice that the expected number of output sequences is $M E[N_1]$, and the expected number of pairs 
of output strings 
% that are not generated from the same input string
is at most $M^2 E[N_1]^2$.
Hence, the expected number of incorrect edges satisfies
\begin{align*}
    E[Z] \leq M^2 E[N_1]^2 M^{-\gamma} = E[N_1]^2 M^{2-\gamma}.    
\end{align*}

Finally, using Markov's inequality, we have that 
\begin{align*}
    \Pr\left(Z > M^{2-\gamma + \epsilon}\right)
\leq \frac{E[Z]}{M^{2-\gamma + \epsilon}} &\leq 
\frac{E[N_1]^2 M^{2-\gamma}}{M^{2-\gamma + \epsilon}}\\
&= E[N_1]^2 M^{-\epsilon},    
\end{align*}
which tends to $0$ as $M \to \infty$ for any $\epsilon > 0$.
\end{proof}

{\footnotesize
\bibliographystyle{IEEEbib}
\bibliography{refs}
}
%%
%% where we here have assume the existence of the files
%% definitions.bib and bibliofile.bib.
%% BibTeX documentation can be obtained at:
%% http://www.ctan.org/tex-archive/biblio/bibtex/contrib/doc/
%%%%%%

%\end{thebibliography}

\end{document}

\ifCLASSINFOpdf
  % \usepackage[pdftex]{graphicx}
  % declare the path(s) where your graphic files are
  % \graphicspath{{../pdf/}{../jpeg/}}
  % and their extensions so you won't have to specify these with
  % every instance of \includegraphics
  % \DeclareGraphicsExtensions{.pdf,.jpeg,.png}
\else
  % or other class option (dvipsone, dvipdf, if not using dvips). graphicx
  % will default to the driver specified in the system graphics.cfg if no
  % driver is specified.
  % \usepackage[dvips]{graphicx}
  % declare the path(s) where your graphic files are
  % \graphicspath{{../eps/}}
  % and their extensions so you won't have to specify these with
  % every instance of \includegraphics
  % \DeclareGraphicsExtensions{.eps}
\fi
